\pdfoutput=1
\documentclass[10pt, conference, letterpaper]{IEEEtran}
\usepackage{amsmath,amssymb,amsthm}
\usepackage{graphicx}
\usepackage{tikz}
\usetikzlibrary{arrows.meta,calc}
\usepackage{cite}
\usepackage{textcomp}
\newtheorem{theorem}{Theorem}
\newtheorem{lemma}{Lemma}
\newtheorem{corollary}{Corollary}
\theoremstyle{remark}\newtheorem{remark}{Remark}\theoremstyle{plain}
\begin{document}
\title{S-PASS: Star-Topology Pinching-Antenna Systems with Pipelined Antenna Pre-Positioning}
\author{
\IEEEauthorblockN{Wei Jiang}
\IEEEauthorblockA{Intelligent Networking Research Group\\
German Research Center for Artificial Intelligence (DFKI)\\
Kaiserslautern, 67663 Germany}
\and
\IEEEauthorblockN{Hans D. Schotten}
\IEEEauthorblockA{Department of Electrical and Computer Engineering\\
University of Kaiserslautern (RPTU)\\
Kaiserslautern, 67663 Germany}
}
\maketitle
\raggedbottom
\begin{abstract}
Pinching-antenna systems (PASS) flexibly reconfigure the wireless channel by positioning dielectric antennas along a waveguide near each user. Releasing and re-attaching an antenna, however, is a mechanical operation orders of magnitude slower than electronic switching, wasting much transmission time. Existing designs generally assume instant reconfiguration, neglect the mechanical latency, or keep the antenna fixed. This paper proposes S-PASS, a star architecture of waveguides sharing one radio-frequency chain through a switch, and pipelined antenna pre-positioning (PAP), a time-division protocol in which idle waveguides pre-position their antennas for their next users while one transmits. It derives in closed form the scheduling efficiency, the number of waveguides beyond which the latency costs no transmission time, and the latency up to which PAP outperforms fixed placement. Analytical and numerical results show that this latency grows with each added waveguide, and users at the outer ends of the waveguides gain most.
\end{abstract}

\begin{IEEEkeywords}
Pinching-antenna systems, flexible antennas, time-division multiple access, mechanical latency, non-orthogonal multiple access, scheduling, star topology.
\end{IEEEkeywords}

\section{Introduction}

Pinching-antenna systems (PASS)~\cite{Ref_fukuda2022ntt,Ref_ding2025pass,Ref_liu2025tutorial} apply a dielectric particle, a ``pinch'', to a waveguide to create a radiation point, a \emph{pinching antenna} (PA), within meters of a user. Large-scale path loss and line-of-sight blockage become design variables. For multi-user service on a single waveguide, \emph{re-pinching}, i.e., releasing the pinch and re-attaching it at the next user's position, enables time-division multiple access (TDMA) with every user served from the point on the waveguide closest to it. In the symmetric-rate two-user example of~\cite{Ref_ren2025nomaoroma}, such re-pinched TDMA needs less transmit power than fixed-placement power-domain non-orthogonal multiple access (NOMA), because a fixed placement penalizes the user farthest from the pinch.
This advantage assumes that re-pinching takes no time. In practice, releasing the pinch and re-attaching it elsewhere is a mechanical cycle that can take milliseconds to seconds. A recent survey of reconfigurable antennas names the mismatch between that timescale and the much faster scheduling loop ``a key unresolved issue''~\cite{Ref_zhao2026rasurvey}. A re-pinch of tens of milliseconds can leave the channel idle for many slots. Re-pinched TDMA incurs this overhead between slots requiring different PA positions, whereas fixed-placement NOMA avoids it.

Existing designs treat reconfiguration latency in different ways. Waveguide switching and multiplexing studies~\cite{Ref_zhu2026ws}, the access protocols of segmented waveguide-enabled PASS (SWAN)~\cite{Ref_ouyang2025swan,Ref_shan2026swanprot}, and a switched-feed design that routes one radio-frequency (RF) chain among star-arranged waveguide segments to reduce terahertz in-guide loss~\cite{Ref_jiang2026thzpass} model at most the electronic switching time, not the mechanical latency. On a single waveguide, the latency of re-pinched TDMA has been modeled as a guard interval~\cite{Ref_jiang2026masurvey}, and antenna roaming accepts a finite moving speed and lets the antenna transmit while it travels~\cite{Ref_wang2026roaming}. Movable-antenna designs trade movement time against transmission time~\cite{Ref_hu2026matradeoff}, and electronically reconfigurable pinching avoids movement but radiates only where elements were installed~\cite{Ref_plata2026epinch}. None of them lets an idle PA move while another waveguide transmits, the case analyzed in this paper.

To fill this gap, this paper proposes S-PASS, a star architecture in which several dielectric waveguides, called \emph{arms}, extend radially from one central hub and share its single RF chain through a switch. On these arms, \emph{pipelined antenna pre-positioning} (PAP), a TDMA protocol, moves the antennas of the idle arms to their next users while one arm transmits, so that motion and transmission overlap as in a processor pipeline. The S in S-PASS stands for both features, the \emph{star} topology and the \emph{scheduling} it enables. Both features set S-PASS apart from center-fed PASS (C-PASS)~\cite{Ref_gan2026cpass}, which splits the power at the center of a single waveguide so that both halves radiate at once, and which does not consider antenna movement time. PAP absorbs the latency completely once enough arms share a balanced load and partly otherwise, and it reduces to conventional TDMA as the latency vanishes. The main contributions are as follows.
\begin{itemize}
\item \emph{We propose} the S-PASS architecture and the PAP protocol. S-PASS feeds all arms at the central hub, where one RF chain reaches them through a switch, whereas SWAN feeds its segments at remote points along the waveguide, which requires cabling to each feed. An isolated arm failure removes only its own sector. Each arm carries a single PA, so the pre-positioned antennas leave the uplink free of the inter-antenna radiation (IAR) that couples several active pinches on one waveguide~\cite{Ref_ouyang2025swan}.
\item \emph{We derive} the scheduling efficiency of PAP, the fraction of the frame that carries payload, over finite frames in closed form for its interleaved and burst schedules, the number of arms beyond which the latency costs no transmission time, and the latency up to which PAP outperforms the fixed-placement NOMA reference and fixed-placement TDMA. The analysis assumes quasi-static users, with balanced arm loads and equal movement times for the interleaved schedule.
\item \emph{We show} that this tolerable latency increases by the same amount with each additional arm. This holds for a fixed number of users per arm and a given fixed-placement baseline. Re-pinching improves fifth-percentile and mean spectral efficiency over fixed-placement TDMA once enough arms share the load in the evaluated free-space setting.
\end{itemize}

The rest of this paper is organized as follows. Section~\ref{sec:model} presents the system model, Section~\ref{sec:protocol} describes PAP, Sections~\ref{sec:theory} and~\ref{sec:cross} give the analysis and numerical results, and Section~\ref{sec:concl} concludes.

\section{System Model}\label{sec:model}

\subsection{Star Architecture}
The hub sits at the center of the coverage area, at $[0,0,h]^{T}$, i.e., at height $h$ above the ground plane. It hosts one RF chain that connects through an electronic $1{:}S$ switch to $S$ separate dielectric waveguides, the arms. Each arm is a complete waveguide with its own feed at the hub, not a segment of one shared waveguide as in SWAN~\cite{Ref_ouyang2025swan}. Two collinear arms resemble a C-PASS~\cite{Ref_gan2026cpass} whose center feed is switched between the two halves of the waveguide instead of splitting the power between them; more arms cover a two-dimensional area. The system operates in time-division duplex (TDD), so the one chain and one switch serve both link directions.

The arms point outward at distinct azimuths $\varphi_1<\dots<\varphi_S$ (Fig.~\ref{fig:sys}). Arm $s\in\{1,\dots,S\}$ extends from the hub to the coverage radius $D$ along the unit direction $\mathbf{e}_s=[\cos\varphi_s,\sin\varphi_s,0]^{T}$ and carries a single movable PA driven by its own linear actuator. All arms lie in the horizontal plane at height $h$, so the PA of arm $s$ is located at $\boldsymbol{\psi}_s=\rho_s\,\mathbf{e}_s+[0,0,h]^{T}$, $\rho_s\in[0,D]$, with fixed azimuth $\varphi_s$ and movable radius $\rho_s$. The single-axis antenna coordinate of the PASS literature~\cite{Ref_ding2025pass} thus becomes a polar one, and two-dimensional coverage decomposes into $S$ independent radial placement problems. The arms partition the disk of radius $D$ into \emph{nearest-arm sectors}: the sector of arm $s$ contains the ground points closest to arm $s$, with ties assigned to the lowest-index arm. There are $K$ single-antenna users on the ground plane, each assigned to the arm of its sector. Arm $s$ serves $K_s$ users, $\sum_s K_s=K$, the $k$th at the ground point $\mathbf{u}_{s,k}$. Its projection on the arm, the \emph{abscissa} $x_{s,k}=\min\{D,\max\{0,\mathbf{u}_{s,k}^{T}\mathbf{e}_s\}\}\in[0,D]$, is the closest feasible PA position to the user. At $\rho_s=x_{s,k}$, a horizontal offset $r_{s,k}=\|\mathbf{u}_{s,k}-x_{s,k}\mathbf{e}_s\|$ replaces $h$ by $\sqrt{h^2+r_{s,k}^2}$ below; the analysis takes $r_{s,k}=0$, which Fig.~\ref{fig:cov} relaxes.

\begin{figure}[!t]\centering
\includegraphics[width=0.80\columnwidth]{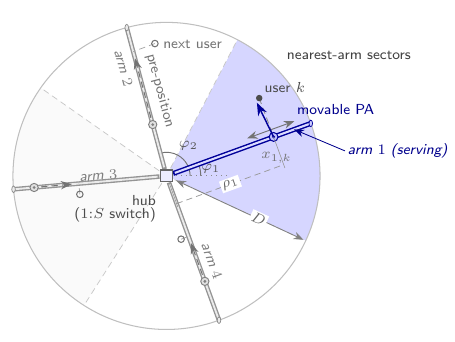}\\ {\footnotesize (a)}\\[5pt]
\includegraphics[width=0.85\columnwidth]{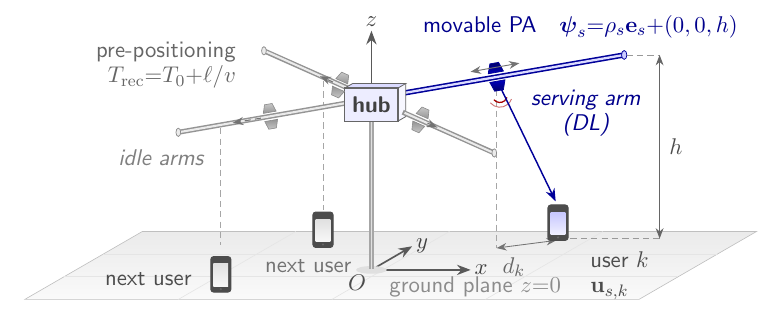}\\ {\footnotesize (b)}
\caption{S-PASS architecture. (a)~Bird's-eye view: nearest-arm sectors with one movable PA per arm. (b)~Perspective view: one RF chain behind a central $1{:}S$ switch; the active arm serves its user (solid), while idle arms pre-position their PAs (dashed).}\label{fig:sys}\end{figure}

\subsection{Channel and Signal Model}
Assuming line-of-sight serving links, we adopt the free-space model of~\cite{Ref_ding2025pass,Ref_liu2025tutorial}: the channel coefficient between a PA at $\boldsymbol{\psi}$ and a user at $\mathbf{u}$ is $h_{\mathrm o}(\mathbf{u},\boldsymbol{\psi})=\sqrt{\eta_0}\,e^{-j(2\pi/\lambda)\|\boldsymbol{\psi}-\mathbf{u}\|}/\|\boldsymbol{\psi}-\mathbf{u}\|$, $\eta_0=(\lambda/(4\pi))^2$, with $\lambda$ the carrier wavelength. Inside arm $s$ the guided mode travels from the feed to the PA over the radius $\rho_s$ and accumulates~\cite{Ref_ouyang2025swan}
\begin{equation}\label{eq:hin}
h_{\mathrm i}(\rho_s)=10^{-\frac{\alpha_g\rho_s}{20}}\,e^{-j\frac{2\pi\rho_s}{\lambda_g}},\qquad \lambda_g=\lambda/n_{\mathrm{eff}},
\end{equation}
where $n_{\mathrm{eff}}$ is the effective refractive index and $\alpha_g$ the attenuation in dB/m. We focus on the millimeter-wave (mmWave) regime and neglect in-guide loss in the rate analysis, setting $\alpha_g=0$ and $|h_{\mathrm i}|=1$; the terahertz regime, where this loss can be substantial~\cite{Ref_jiang2026thzpass}, is left to future work. Switch insertion loss is also neglected.

\subsubsection{Downlink}
In a downlink (DL) slot in which the switch connects the chain to arm $s$, its user $k$ receives $y_k=\sqrt{P_d}\,h_{\mathrm i}(\rho_s)\,h_{\mathrm o}(\mathbf{u}_{s,k},\boldsymbol{\psi}_s)\,b_k+n_k$, where $b_k$ is the unit-power data symbol for user $k$, $P_d$ the DL transmit power, and $n_k\sim\mathcal{CN}(0,\sigma_n^2)$ the noise. With the along-arm PA--user distance $d_k=|\rho_s-x_{s,k}|$, user $k$ achieves the rate
\begin{equation}\label{eq:rate}
R_k=\log_2(1+\gamma_k),\qquad \gamma_k=\frac{P_d\eta_0}{\sigma_n^{2}(d_k^{2}+h^{2})},
\end{equation}
where $\gamma_k$ is its signal-to-noise ratio. Both are maximized when the PA is directly over the user ($d_k=0$), where they no longer depend on $k$: $\gamma_k=\gamma_0\triangleq P_d\eta_0/(\sigma_n^{2}h^{2})$ and $R_k=R_0\triangleq\log_2(1+\gamma_0)$ for all $k$. Per-user re-pinching attains $R_0$ for every user.

\subsubsection{Uplink}
In an uplink (UL) slot in which the switch connects the chain to arm $s$, its PA couples the signal of user $k$ into the waveguide, $y_s=\sqrt{P_u}\,h_{\mathrm i}(\rho_s)\,h_{\mathrm o}(\mathbf{u}_{s,k},\boldsymbol{\psi}_s)\,b_{k}+n_s$, with $P_u$ the UL transmit power and $n_s\sim\mathcal{CN}(0,\sigma_u^2)$ the noise at the hub receiver, whose power $\sigma_u^2$ differs from the noise power $\sigma_n^2$ of the user terminal and may include the switch insertion loss. Because each arm carries exactly one PA, the captured signal passes no other pinch on its way to the feed, so the IAR that complicates multi-PA uplink modeling~\cite{Ref_ouyang2025swan} is absent by construction. Idle arms are assumed to have matched terminations at the switch and their far ends. Residual scattering, inter-arm coupling, and switch leakage are neglected. UL slots follow the same timing rules as DL slots, so every scheduling result below applies to either direction; the rate analysis considers the DL.

\section{Pipelined Antenna Pre-Positioning}\label{sec:protocol}

Time is partitioned into frames of duration $T_{\mathrm{frame}}$, in each of which every user is served once in a payload slot of duration $T_{\mathrm{slot}}$, the per-user service period of the scheduler, which may span many physical-layer slots. Switching the RF chain between arms costs an electronic switching time $T_{\mathrm{sw}}$, and re-pinching a PA over a distance $\ell$ costs a mechanical reconfiguration time $T_{\mathrm{rec}}(\ell)$, which can span many slots and is modeled as (A7) in \eqref{eq:Trec}. The analysis further assumes the following.
\begin{enumerate}[\IEEEsetlabelwidth{(A6)}]
\item[(A1)] \emph{Electronic switching}: $T_{\mathrm{sw}}\ll T_{\mathrm{slot}}$, and one guard of duration $T_{\mathrm{sw}}$ follows every payload slot, even on the same arm; the closing guard of a frame also covers the DL--UL transition.
\item[(A2)] \emph{Move--serve exclusion}: a PA cannot move during its own payload slot, whereas PAs of different arms move in parallel.
\item[(A3)] \emph{Stationarity}: user positions and arm assignments remain fixed throughout the analyzed frames; initialization is excluded from $T_{\mathrm{frame}}$.
\item[(A4)] \emph{Continuous-time slots}: a slot may start at any instant.
\item[(A5)] \emph{Lossless rate model}: $\alpha_g=0$ in \eqref{eq:hin}, and switch insertion loss is neglected.
\item[(A6)] \emph{Once-per-frame service}: each user is served exactly once per frame, i.e., round-robin TDMA, without which the converse in Theorem~\ref{thm:duty} fails.
\end{enumerate}

PAP has a single design parameter, the batch size $B$, and runs as follows: Step~1 once, Steps~2--4 in a loop within each frame, and Step~5 at each frame boundary.
\begin{enumerate}
\item \emph{Initialization.} Each arm $s$ sorts its $K_s$ users by abscissa into the sweep order $\pi_s$, a permutation of $\{1,\dots,K_s\}$ with $x_{s,\pi_s(1)}\le\dots\le x_{s,\pi_s(K_s)}$, and positions its PA at the first target, $\rho_s=x_{s,\pi_s(1)}$. Empty arms are skipped; the scheduling analysis assumes $K_s\ge1$ for all $S$ arms. By (A3), this step is amortized over many frames.
\item \emph{Data transmission.} The switch connects the RF chain to the active arm $s$ (A1), which serves up to $B$ of its remaining users in sweep order, one payload slot each, with $\rho_s=x_{s,k}$, the placement optimum of \eqref{eq:rate}. For $B>1$, the moves within a batch cannot overlap with service on other arms.
\item \emph{Antenna pre-positioning.} Meanwhile, every idle arm $j\ne s$ moves its PA to its next target, $\rho_j\to x_{j,k'}$, as soon as its own payload has ended (A2), so that the move overlaps the service of other arms and the electronic guards.
\item \emph{Arm selection.} The scheduler visits the arms in a fixed cyclic order, skipping arms whose users have all been served, until every user has been served exactly once (A6). Let payload $n=1,\dots,K$ of the frame be served on arm $s_n$ and start at $t_n$, with $t_1=0$. The PA of arm $s_n$ is ready at $r_n=0$ if this is the arm's first payload of the frame, since Step~1 or Step~5 has placed it at the target, and otherwise at $r_n=t_m+T_{\mathrm{slot}}+T_{\mathrm{rec}}(\ell_n)$, where $m<n$ indexes the arm's previous payload and $\ell_n$ is the distance between the two targets. Then
\begin{equation}\label{eq:timing}
t_n=\max\{t_{n-1}+T_{\mathrm{slot}}+T_{\mathrm{sw}},\,r_n\},
\end{equation}
and $T_{\mathrm{frame}}=t_K+T_{\mathrm{slot}}+T_{\mathrm{sw}}$ includes the closing guard. A move with $r_n\le t_{n-1}+T_{\mathrm{slot}}+T_{\mathrm{sw}}$ is \emph{absorbed}: the re-pinch still occurs but adds nothing to the frame length.
\item \emph{Sweep reversal.} At the frame boundary each arm reverses its sweep direction (the \emph{alternating-sweep} rule). Its PA thus ends the frame at one end of its users' span, exactly where the next frame begins, so no approach move is needed (Lemma~\ref{lem:sweep}), at a delay cost given in Remark~\ref{rem:delay}. Frame duration may depend on the sweep direction. Under TDD, DL and UL frames alternate at turnaround.
\end{enumerate}

In Fig.~\ref{fig:prtdma}, with $S{=}3$ arms and two users each, arm~1 re-pinches toward its second user $u_{1,2}$ while arms~2 and~3 are served. When the switch returns to arm~1, its PA is already in place. The batch size selects the schedule: \emph{interleaved} ($B{=}1$, one user per visit) or \emph{burst} ($B\ge\max_s K_s$, all users of an arm in one block), whose efficiencies the next section derives.

\begin{figure}[!t]\centering
\begin{tikzpicture}[x=1.06cm,y=0.28cm,font=\scriptsize,line cap=round]
\node[anchor=west,black!75] at (-0.70,10.6) {(a) serial re-pinched TDMA, one waveguide};
\node[anchor=east] at (-0.12,8.75) {guide};
\foreach \x in {1.0,4.13}{%
  \fill[red!10] (\x,8.3) rectangle ({\x+2.13},9.2);
  \draw[red!50!black,line width=0.4pt] (\x,8.3) rectangle ({\x+2.13},9.2);
  \node[red!50!black] at ({\x+1.065},8.75) {re-pinch $T_{\mathrm{rec}}$};}
\foreach \x/\lab in {0/$u_1$,3.13/$u_2$}{%
  \fill[blue!22] (\x,8.3) rectangle ({\x+1.0},9.2);
  \draw[blue!55!black,line width=0.5pt] (\x,8.3) rectangle ({\x+1.0},9.2);
  \node at ({\x+0.5},8.75) {\lab};}
\draw[->,black!70] (0,7.6) -- (6.9,7.6) node[right] {$t$};
\node[red!55!black,anchor=north,inner sep=1.5pt] at (3.45,7.55) {every move interrupts transmission: the channel is idle between slots};
\node[anchor=west,black!75] at (-0.70,6.15) {(b) PAP (interleaved, $S{=}3$)};
\fill[black!8] (1.00,3.4) rectangle (3.13,4.3);\draw[black!40,line width=0.4pt] (1.00,3.4) rectangle (3.13,4.3);
\fill[black!8] (2.13,2.0) rectangle (4.26,2.9);\draw[black!40,line width=0.4pt] (2.13,2.0) rectangle (4.26,2.9);
\fill[black!8] (3.26,0.6) rectangle (5.39,1.5);\draw[black!40,line width=0.4pt] (3.26,0.6) rectangle (5.39,1.5);
\foreach \i in {1,2,3,4,5,6}{\fill[black!14] ({\i*1.13-0.13},-0.75) rectangle ({\i*1.13},4.35);}
\node[black!55,anchor=south] at (1.065,4.35) {$T_{\mathrm{sw}}$};
\node[anchor=east] at (-0.12,3.85) {arm $1$};
\node[anchor=east] at (-0.12,2.45) {arm $2$};
\node[anchor=east] at (-0.12,1.05) {arm $3$};
\node[anchor=east] at (-0.12,-0.35) {switch};
\foreach \i/\yy/\lab in {0/3.4/$u_{1,1}$,3/3.4/$u_{1,2}$,1/2.0/$u_{2,1}$,4/2.0/$u_{2,2}$,2/0.6/$u_{3,1}$,5/0.6/$u_{3,2}$}{
  \fill[blue!22] ({\i*1.13},\yy) rectangle ({\i*1.13+1.0},{\yy+0.9});
  \draw[blue!55!black,line width=0.5pt] ({\i*1.13},\yy) rectangle ({\i*1.13+1.0},{\yy+0.9});
  \node at ({\i*1.13+0.5},{\yy+0.45}) {\lab};}
\node[black!60,inner sep=1pt] at (2.07,3.85) {re-pinch to $u_{1,2}$};
\draw[->,black!55] (2.33,2.45) -- (4.06,2.45);
\draw[->,black!55] (3.46,1.05) -- (5.19,1.05);
\foreach \i/\a in {0/1,1/2,2/3,3/1,4/2,5/3}{%
  \fill[blue!8] ({\i*1.13},-0.7) rectangle ({\i*1.13+1.0},0.0);
  \draw[blue!45!black,line width=0.35pt] ({\i*1.13},-0.7) rectangle ({\i*1.13+1.0},0.0);
  \node[blue!45!black] at ({\i*1.13+0.5},-0.35) {arm $\a$};}
\draw[densely dashed,black!55,line width=0.6pt] (6.78,-0.75) -- (6.78,4.35);
\node[black!60,anchor=south east,align=right,inner sep=2pt] at (6.60,3.48) {frame ends:\\sweep reverses};
\draw[->,black!70] (0,-1.3) -- (6.9,-1.3) node[right] {$t$};
\draw[<->,black!60] (3.39,-1.9) -- (4.39,-1.9);\node[black!65,anchor=north] at (3.89,-1.95) {$T_{\mathrm{slot}}$};
\node[blue!45!black,anchor=north,inner sep=1.5pt] at (3.45,-2.95) {the moves are absorbed: no mechanical waiting between payload slots};
\end{tikzpicture}
\caption{Timing of (a)~a serial re-pinched TDMA sequence on one waveguide and (b)~PAP with the interleaved schedule ($S{=}3$). Panel (b) shows one frame after initialization. Blue: payload slots; red: re-pinching that interrupts transmission; gray: re-pinching absorbed by the service of other arms and electronic guards; vertical bands: electronic guards $T_{\mathrm{sw}}$; dashed line: frame turnaround (Step~5).}
\label{fig:prtdma}
\end{figure}

\section{Performance Analysis}\label{sec:theory}

Since latency costs time, not signal quality, the figure of merit is the fraction of the frame that carries payload,
\begin{equation}
\eta\triangleq\frac{KT_{\rm slot}}{T_{\rm frame}}\in(0,1],
\end{equation}
termed the \emph{scheduling efficiency}, with $T_{\rm frame}$ from \eqref{eq:timing}; when every user is served at the common rate $R_0$, the effective sum rate within the frame is $R_{\rm eff}=\eta R_0$.

\subsection{Re-pinching Latency Model}
(A7) \emph{Latency model}: a PA move over a distance $\ell$ costs
\begin{equation}\label{eq:Trec}
T_{\mathrm{rec}}(\ell)=\begin{cases}
0, & \ell=0,\\
T_0+\ell/v, & \ell>0,
\end{cases}
\end{equation}
where the actuation constant $T_0$ lumps pinch release, acceleration, and settling, and $v$ is the travel speed. A recent survey places some PA reconfiguration operations at ``ms--s''~\cite{Ref_zhao2026rasurvey}, and the electronic-pinching proposal~\cite{Ref_plata2026epinch} does not report a switching time. We therefore treat \eqref{eq:Trec} as a parametric model with illustrative values $T_0\in[5,50]$\,ms and $v\in[0.05,0.5]$\,m/s, e.g., $7$\,ms for a 1\,mm adjustment and approximately $2$--$20$\,s for a 1\,m traverse. For arm $s$, whose $K_s$ users have $N_s$ distinct target positions spanning $D_s=\max_k x_{s,k}-\min_k x_{s,k}$, the total motion time per sorted sweep is $M_s\triangleq(N_s{-}1)T_0+D_s/v$.

\begin{lemma}[Sorted sweeps]\label{lem:sweep}
Under the alternating-sweep rule, with (A3) holding over two frames, arm $s$ travels exactly $D_s$ per frame, so its motion time per frame equals $M_s$, and no service order uses less motion time.
\end{lemma}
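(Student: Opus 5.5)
The plan is to prove the two claims separately: an exact accounting of the motion under the alternating-sweep rule, and then a lower bound that holds for every service order permitted by (A6).

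\emph{Achievability.} By (A3), the targets $x_{s,k}$ are the same in two consecutive frames. Let $p_1<\dots<p_{N_s}$ be the distinct abscissae, so $D_s=p_{N_s}-p_1$. Step~1 sorts the users, so one frame of the sweep $\pi_s$ visits the targets in nondecreasing order. Users with the same abscissa are served consecutively, and by \eqref{eq:Trec} these moves have $\ell=0$ and cost nothing. The PA therefore makes exactly $N_s-1$ moves of positive length, whose lengths $p_{i+1}-p_i$ telescope to $D_s$. By \eqref{eq:Trec} the motion time is $(N_s-1)T_0+D_s/v=M_s$.

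At the frame boundary the PA rests at $p_{N_s}$. Step~5 reverses the order, so the next frame starts at $p_{N_s}$ and no approach move is needed. The reversed sweep again makes $N_s-1$ positive moves totalling $D_s$. The same argument applies in every frame, for either sweep direction, which shows the motion time per frame is exactly $M_s$.

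\emph{Optimality.} Consider any service order that serves each user once per frame, with arbitrary start and end positions. The PA must occupy every one of the $N_s$ distinct targets during the frame.
\begin{itemize}
\item \emph{Move count.} Visiting $N_s$ distinct points requires at least $N_s-1$ moves with $\ell>0$, each costing at least $T_0$.
\item \emph{Distance.} The path must reach both $p_1$ and $p_{N_s}$, so by the triangle inequality its total length is at least $p_{N_s}-p_1=D_s$.
\end{itemize}
Because $T_{\mathrm{rec}}$ is additive over moves, any order incurs motion time at least $(N_s-1)T_0+D_s/v=M_s$.

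If the order starts the frame away from where the previous frame ended, an approach move is also required. That move only adds to the bound, so it does not weaken the comparison.

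\emph{Main obstacle.} The delicate point is stating precisely what "per frame" motion means across frame boundaries. The claim must attribute any repositioning between frames to some frame, so that the alternating sweep is shown to incur no hidden approach cost. This is where (A3) over two frames is needed. I would first define the per-frame motion as all moves between the frame's first payload and the next frame's first payload. Under that definition, the alternating sweep contributes exactly $M_s$, and every other order contributes at least $M_s$.
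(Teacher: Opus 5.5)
Your proposal is correct and takes essentially the same route as the paper. For any service order you lower-bound the motion by $N_s-1$ nonzero moves and total travel $D_s$, the sorted sweep attains both bounds, and the alternating-sweep rule removes the approach move between frames. Your accounting window (from one frame's first payload to the next frame's first payload) and your additivity step combining the two bounds into $(N_s-1)T_0+D_s/v$ make explicit what the paper's terse proof leaves implicit.
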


\begin{IEEEproof}
At least $N_s-1$ nonzero moves are needed to visit all $N_s$ distinct target positions. Any walk that visits all users of the arm travels at least $D_s$ per frame, with equality iff it is monotone between the two extremes. Starting at an interior position requires additional travel. The alternating-sweep rule avoids it, because each frame ends at the extreme where the next one starts.
\end{IEEEproof}

\subsection{Interleaved Schedule and the Sufficient Number of Arms}
With $B{=}1$ the switch cycles through arms $1,\dots,S$.

\begin{theorem}[Efficiency of the interleaved schedule]\label{thm:duty}
Consider $S$ arms with balanced loads $K_s=q=K/S$, where $q\ge1$ is an integer, and the sweep orders of Lemma~\ref{lem:sweep}. Let $T_{\rm rec}$ be the common duration of all consecutive-user transitions within a sweep. After initialization, the interleaved schedule attains the efficiency
\begin{equation}\label{eq:eta}
\begin{aligned}
\eta(S)=\min\!\bigg\{&\frac{T_{\rm slot}}{T_{\rm slot}+T_{\rm sw}},\\
&\frac{K\,T_{\rm slot}}{S(T_{\rm slot}+T_{\rm sw})+(q-1)(T_{\rm rec}+T_{\rm slot})}\bigg\},
\end{aligned}
\end{equation}
and no schedule respecting these sweep orders and satisfying (A1), (A2), and (A6) exceeds it. For $q>1$, the first term of \eqref{eq:eta}, the switching limit, is reached iff $S\ge S^\star$, where
\begin{equation}
S^\star=\Big\lceil\frac{T_{\rm rec}+T_{\rm slot}}{T_{\rm slot}+T_{\rm sw}}\Big\rceil.
\end{equation}
For $q=1$, no move is needed and the switching limit is always reached.
\end{theorem}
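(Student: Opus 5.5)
We split the proof into achievability (the interleaved recursion \eqref{eq:timing} yields exactly \eqref{eq:eta}) and a converse (any admissible schedule needs at least as long a frame). Throughout, write $\tau\triangleq T_{\rm slot}+T_{\rm sw}$ for the electronic cycle of one payload plus its guard, and index payloads as $n=(c-1)S+s$, meaning round $c\in\{1,\dots,q\}$ on arm $s$.

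\emph{Achievability.} Under the interleaved schedule, the previous payload of arm $s$ in round $c\ge2$ is payload $n-S$, so \eqref{eq:timing} becomes
\[
t_n=\max\{t_{n-1}+\tau,\;t_{n-S}+T_{\rm slot}+T_{\rm rec}\}\quad\text{for } n>S,
\]
with $t_n=(n-1)\tau$ for $n\le S$, because $r_n=0$ there. The starts of round~1 are spaced by exactly $\tau$, and both constraints are shift-invariant. By induction on $c$, each round therefore starts at
\[
t_{(c-1)S+1}=t_{(c-2)S+1}+\max\{S\tau,\;T_{\rm slot}+T_{\rm rec}\},
\]
and within a round the starts remain spaced by $\tau$. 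The inductive step checks that the readiness constraint binds, if at all, only at the first arm of each round: the later arms of the round became ready at the same offsets after their own previous payloads. Summing the round increments and adding the $S$ payload-plus-guard intervals of the last round gives
\[
T_{\rm frame}=S\tau+(q-1)\max\{S\tau,\;T_{\rm slot}+T_{\rm rec}\}.
\]
This equals $\max\{K\tau,\;S\tau+(q-1)(T_{\rm rec}+T_{\rm slot})\}$, because $qS\tau=K\tau$. Then $\eta=KT_{\rm slot}/T_{\rm frame}$ is exactly the minimum in \eqref{eq:eta}.

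\emph{Converse.} Two chain-type lower bounds on $T_{\rm frame}$ hold for any schedule obeying (A1), (A2), (A6) and the given sweep orders.
\begin{enumerate}
\item By (A1) the $K$ payloads are serialized on the single chain, each followed by a guard. Hence $T_{\rm frame}\ge K\tau$.
\item Fix any arm $s$. Its $q$ payloads form a chain: the sweep order is fixed, (A2) forbids moving during its own payload, and every transition costs $T_{\rm rec}$. So the span from the start of its first payload to the end of its last is at least $qT_{\rm slot}+(q-1)T_{\rm rec}$.
\end{enumerate}
To obtain the $S\tau$ term in the second bound, take $s$ to be the arm whose first payload starts last. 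Before that start, at least the $S-1$ first payloads of the other arms have occurred, each with its guard, so that start is no earlier than $(S-1)\tau$. Adding its chain and the closing guard gives
\[
T_{\rm frame}\ge (S-1)\tau+qT_{\rm slot}+(q-1)T_{\rm rec}+T_{\rm sw}=S\tau+(q-1)(T_{\rm rec}+T_{\rm slot}).
\]
The maximum of the two bounds matches the achievable frame length, which proves optimality.

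\emph{Threshold.} For $q>1$ the switching term attains the minimum iff $S\tau\ge T_{\rm rec}+T_{\rm slot}$, which is equivalent to $S\ge S^\star$ because $S$ is an integer. For $q=1$ the second term equals $K\tau/\tau$ scaled identically, i.e., $KT_{\rm slot}/(S\tau)$ with $K=S$, so both terms coincide and no move occurs.

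\emph{Main obstacle.} The delicate step is the converse's refinement from $(S-1)\tau$ to the exact $S\tau$ offset. It depends on the tie-in with initialization: a frame-first payload has $r_n=0$, and by Lemma~\ref{lem:sweep} no approach move is needed, so nothing is gained or lost at the frame boundary. It also depends on checking that the closing guard is counted exactly once. In the achievability induction, care is needed to confirm that the within-round spacing stays exactly $\tau$ when the readiness constraint binds. This holds because all transitions have the common duration $T_{\rm rec}$ and all arms share the same delay pattern.
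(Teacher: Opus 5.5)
Your proposal is correct and follows the paper's proof essentially step for step. For achievability, both use the cyclic schedule whose rounds start $\max\{S(T_{\rm slot}+T_{\rm sw}),\,T_{\rm slot}+T_{\rm rec}\}$ apart: the paper phrases this through the idle window $W(S)=(S-1)(T_{\rm slot}+T_{\rm sw})+T_{\rm sw}\ge T_{\rm rec}$, while your explicit induction on the recursion \eqref{eq:timing} makes its ``consecutive cycles start $T_{\rm rec}+T_{\rm slot}$ apart'' claim rigorous. For the converse, both use the same two bounds (the $K$ payloads and guards on the shared chain, and the last-starting arm's $(S-1)(T_{\rm slot}+T_{\rm sw})$ offset plus its $q-1$ move-constrained intervals, final payload and closing guard), and your threshold and $q=1$ arguments match as well.
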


\begin{IEEEproof}
\emph{Achievability.} Visit the arms cyclically; by (A1) every service is followed by a guard of duration $T_{\rm sw}$. By \eqref{eq:timing}, without mechanical waiting, between two consecutive services of arm $s$, the arm is idle for $W(S)=(S-1)(T_{\rm slot}+T_{\rm sw})+T_{\rm sw}$, during which its PA may move (A2). Now $W(S)\ge T_{\rm rec}$ iff $S\ge(T_{\rm rec}+T_{\rm slot})/(T_{\rm slot}+T_{\rm sw})$, i.e., iff $S\ge S^\star$, and the frame is then switch-limited with $\eta=T_{\rm slot}/(T_{\rm slot}+T_{\rm sw})$. The first cycle of $S$ services needs no move because initialization has prepared its targets. If $S<S^\star$, consecutive cycles start $T_{\rm rec}+T_{\rm slot}$ apart, which (A4) permits. There are $q-1$ such gaps, and the last cycle lasts $S(T_{\rm slot}+T_{\rm sw})$, including its closing guard. Dividing $KT_{\rm slot}$ by the resulting frame duration gives \eqref{eq:eta}.
\emph{Converse.} The single RF chain carries $K$ services and $K$ guards per frame, so $T_{\rm frame}\ge K(T_{\rm slot}+T_{\rm sw})$. The last arm to begin service starts no earlier than $(S-1)(T_{\rm slot}+T_{\rm sw})$. Its $q$ services require $q-1$ start-to-start intervals of at least $T_{\rm slot}+T_{\rm rec}$ by (A2), followed by its final payload and guard. Hence, $T_{\rm frame}\ge S(T_{\rm slot}+T_{\rm sw})+(q-1)(T_{\rm slot}+T_{\rm rec})$. Dividing $KT_{\rm slot}$ by the larger bound gives \eqref{eq:eta}.
\end{IEEEproof}

The second term of \eqref{eq:eta} accounts for the $q-1$ moves per arm. For fixed $K$, increasing $S<S^\star$ raises efficiency, but generally not linearly.

\begin{remark}[Exactness of Theorem~\ref{thm:duty}]
Under the alternating-sweep rule, the last user served on an arm in one frame is served first on that arm in the next, so no move is needed between these two services. Theorem~\ref{thm:duty} counts this saving exactly. For $q>1$, the saving raises the efficiency by at most a factor $q/(q-1)$ relative to its large-$q$ limit at fixed $S$ and timing parameters, and it vanishes at the switching limit.
\end{remark}

\begin{remark}[Inter-service intervals]\label{rem:delay}
For the balanced interleaved schedule with equal move times, the maximum start-to-start interval between consecutive services of a user is $2T_{\rm frame}-S(T_{\rm slot}+T_{\rm sw})$. Each user still receives two slots per two frames. With alternating DL and UL frames of equal duration, the same-direction start-to-start interval is $2T_{\rm frame}$. The corresponding end-to-start intervals are smaller by $T_{\rm slot}$.
\end{remark}

\begin{corollary}[Limits]
For the same total $K$ and timing parameters, the efficiency of serial re-pinched TDMA is
\[
\eta(1)=\frac{K\,T_{\rm slot}}{KT_{\rm slot}+T_{\rm sw}+(K-1)\max\{T_{\rm sw},T_{\rm rec}\}}.
\] As $T_{\rm rec}\to0$, $\eta(S)\to T_{\rm slot}/(T_{\rm slot}+T_{\rm sw})$, the bound of ideal waveguide switching at zero reconfiguration cost~\cite{Ref_zhu2026ws}, which PAP attains even at nonzero latency when $q=1$ or $S\ge S^\star$.
\end{corollary}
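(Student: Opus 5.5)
The corollary has two parts, and I will treat them separately. For the serial formula I apply the timing recursion \eqref{eq:timing} directly with $S=1$. For the limit I use Theorem~\ref{thm:duty}.

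\emph{Serial re-pinched TDMA ($S=1$).} All $K$ users sit on one arm, served in the sweep order of Lemma~\ref{lem:sweep}. I take every consecutive transition to cost the common $T_{\rm rec}$, as in Theorem~\ref{thm:duty}.

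Payload $n\ge2$ has predecessor $m=n-1$ on the same arm. Its ready time is therefore $r_n=t_{n-1}+T_{\rm slot}+T_{\rm rec}$, and \eqref{eq:timing} becomes
\[
t_n=t_{n-1}+T_{\rm slot}+\max\{T_{\rm sw},T_{\rm rec}\}.
\]
Starting from $t_1=0$, induction gives $t_K=(K-1)(T_{\rm slot}+\max\{T_{\rm sw},T_{\rm rec}\})$. Adding the closing payload and guard, $T_{\rm frame}=t_K+T_{\rm slot}+T_{\rm sw}$. Hence
\[
T_{\rm frame}=KT_{\rm slot}+T_{\rm sw}+(K-1)\max\{T_{\rm sw},T_{\rm rec}\},
\]
and dividing $KT_{\rm slot}$ by this gives $\eta(1)$.

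I would also note a consistency check against \eqref{eq:eta} with $S=1$, $q=K$. That formula yields the frame length $T_{\rm slot}+T_{\rm sw}+(K-1)(T_{\rm rec}+T_{\rm slot})$ whenever $T_{\rm rec}\ge T_{\rm sw}$. This agrees with the expression above, so $\eta(1)$ matches Theorem~\ref{thm:duty} in that regime. When $T_{\rm rec}<T_{\rm sw}$, the min in \eqref{eq:eta} picks the switching term instead, and the recursion above handles that case.

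\emph{Limit $T_{\rm rec}\to0$.} The first term of \eqref{eq:eta} does not depend on $T_{\rm rec}$. I will show the second term is at least the first for all sufficiently small $T_{\rm rec}$. Set $T_{\rm rec}=0$ in the second term and use $K=Sq$. It becomes
\[
\frac{SqT_{\rm slot}}{S(T_{\rm slot}+T_{\rm sw})+(q-1)T_{\rm slot}}
\;\ge\;\frac{SqT_{\rm slot}}{Sq(T_{\rm slot}+T_{\rm sw})},
\]
because $(q-1)T_{\rm slot}\le S(q-1)(T_{\rm slot}+T_{\rm sw})$. The second term is continuous in $T_{\rm rec}$, so the minimum converges to $T_{\rm slot}/(T_{\rm slot}+T_{\rm sw})$. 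Equivalently, $S^\star\to\lceil T_{\rm slot}/(T_{\rm slot}+T_{\rm sw})\rceil=1\le S$.

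The attainment claim at nonzero latency is exactly the last statement of Theorem~\ref{thm:duty}: the switching limit is reached when $q=1$ or $S\ge S^\star$. Identifying this value with the ideal waveguide-switching bound of~\cite{Ref_zhu2026ws} needs only the observation that, with no reconfiguration cost, each payload still incurs one guard.

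\emph{Main obstacle.} The delicate point is the serial case with $T_{\rm rec}<T_{\rm sw}$, where \eqref{eq:eta} is not directly applicable. This is why I derive $\eta(1)$ from the recursion rather than from the theorem. I will also state explicitly that Step~5 makes the first payload ready at $r_1=0$.
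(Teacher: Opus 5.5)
Your proof is correct and follows essentially the route the paper intends. The paper gives no separate argument: the corollary is the $S=1$ case of the timing recursion \eqref{eq:timing}, equivalently of Theorem~\ref{thm:duty}, together with the $T_{\rm rec}\to0$ limit of \eqref{eq:eta}, which is what you do. Two small corrections follow.

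First, your ``main obstacle'' does not exist. At $S=1$, $q=K$, the minimum in \eqref{eq:eta} equals $KT_{\rm slot}$ divided by $\max\{K(T_{\rm slot}+T_{\rm sw}),\,T_{\rm slot}+T_{\rm sw}+(K-1)(T_{\rm rec}+T_{\rm slot})\}$. This maximum equals $KT_{\rm slot}+T_{\rm sw}+(K-1)\max\{T_{\rm sw},T_{\rm rec}\}$. So \eqref{eq:eta} gives $\eta(1)$ directly in both regimes, including $T_{\rm rec}<T_{\rm sw}$, and your recursion re-derives it rather than covering a missing case.

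Second, drop the remark ``equivalently, $S^\star\to\lceil T_{\rm slot}/(T_{\rm slot}+T_{\rm sw})\rceil=1$.'' A limit cannot be passed through the ceiling. For $T_{\rm sw}=0$, one has $S^\star=2$ for every $0<T_{\rm rec}\le T_{\rm slot}$, so with $S=1$ and $q>1$ the switching limit is approached but never attained. For $T_{\rm sw}>0$ the remark holds in the stronger form $S^\star=1$ once $T_{\rm rec}\le T_{\rm sw}$. Your continuity argument on the second term of \eqref{eq:eta} is sound, and it is what proves the limit.
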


At $T_{\rm rec}=50$\,ms and $T_{\rm sw}\approx0$, a 1\,ms slot requires an impractical $S^\star=51$ arms, whereas a 10\,ms slot needs $S^\star=6$. Under \eqref{eq:Trec} with $T_0=5$\,ms and $v=0.5$\,m/s, a 50\,ms re-pinch moves the PA by about 2\,cm, so these numbers apply to closely spaced consecutive target positions. A traverse of several meters takes seconds; absorbing it with few arms requires service periods of comparable length, as for quasi-static users.

\subsection{Burst Schedule}
With $B\ge\max_s K_s$ an arm serves all its users in one block before the switch moves on.

\begin{theorem}[Efficiency of the burst schedule]
For any $S\ge1$, the burst schedule achieves
\begin{equation}\label{eq:burst}
\eta_{\rm burst}=\frac{K\,T_{\rm slot}}{K(T_{\rm slot}+T_{\rm sw})+\sum_{s,j}\bigl[T_{\rm rec}(\ell_{s,j})-T_{\rm sw}\bigr]^+},
\end{equation}
where $[a]^+=\max\{a,0\}$ and $\ell_{s,j}=|x_{s,\pi_s(j+1)}-x_{s,\pi_s(j)}|$ is the distance between consecutive target positions. The sum is over all arms and $j=1,\ldots,K_s-1$.
\end{theorem}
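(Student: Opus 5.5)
The plan is to evaluate the timing recursion \eqref{eq:timing} directly for the burst schedule and sum the gaps. Order the $K$ payloads of a frame as arm $s_1$'s block, then arm $s_2$'s block, and so on. Each arm appears in exactly one block, visited in its sweep order $\pi_s$. For each payload $n$ we compare the two arguments of the maximum in \eqref{eq:timing}.

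\textbf{First payload of each block.} This payload is the arm's first payload of the frame. Step~1 or Step~5 has already placed its PA at the target, using Lemma~\ref{lem:sweep} under the alternating-sweep rule. Hence $r_n=0$, and the payload starts at $t_n=t_{n-1}+T_{\rm slot}+T_{\rm sw}$. The only overhead it contributes is its trailing guard.

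\textbf{Later payloads within a block.} Here payload $n$ is the $(j+1)$th user of arm $s$, and its previous payload on that arm is $m=n-1$. Therefore
\[
r_n=t_{n-1}+T_{\rm slot}+T_{\rm rec}(\ell_{s,j}),
\]
which gives
\[
t_n-t_{n-1}=T_{\rm slot}+\max\{T_{\rm sw},T_{\rm rec}(\ell_{s,j})\}=T_{\rm slot}+T_{\rm sw}+[T_{\rm rec}(\ell_{s,j})-T_{\rm sw}]^+ .
\]
Two points must be checked here. First, by (A2) the move cannot start before the arm's own payload ends. Second, because no other arm is served inside the block, the move overlaps only the single guard and is never absorbed further. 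When $\ell_{s,j}=0$ (coincident targets), we have $T_{\rm rec}=0$ and the positive part vanishes, which is consistent with \eqref{eq:Trec}. With $T_{\rm frame}=t_K+T_{\rm slot}+T_{\rm sw}$ and $t_1=0$, the increments telescope to
\[
T_{\rm frame}=K(T_{\rm slot}+T_{\rm sw})+\sum_{s}\sum_{j=1}^{K_s-1}[T_{\rm rec}(\ell_{s,j})-T_{\rm sw}]^+ .
\]
Dividing $KT_{\rm slot}$ by this frame length gives \eqref{eq:burst}.

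\textbf{Main obstacle.} The delicate step is the boundary bookkeeping, which is what makes the formula valid for any $S\ge1$. The approach must avoid a move at each block start, and it must also avoid one at the frame boundary. For $S\ge2$, the move from the last user of block $s$ would seem to need absorption by later blocks. In fact, no such move exists: under the alternating-sweep rule the arm's last target in this frame is its first target in the next frame, and Lemma~\ref{lem:sweep} guarantees that no approach move is required. For $S=1$ the same argument applies, and the formula reduces to $\eta(1)$ of the Corollary whenever all moves share a common $T_{\rm rec}$.

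I would verify the reduction to the Corollary's $\eta(1)$ explicitly as a consistency check. The statement asserts only achievability, so no converse is needed.
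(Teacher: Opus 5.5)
Your proposal is correct and follows the same approach as the paper. Like the paper, you evaluate the recursion \eqref{eq:timing} block by block: each block starts with no wait because of the alternating-sweep rule (Lemma~\ref{lem:sweep}), each within-arm transition lasts $\max\{T_{\rm sw},T_{\rm rec}(\ell_{s,j})\}$, each block ends with a closing guard, and summing over the frame gives \eqref{eq:burst}; your explicit telescoping and the $S=1$ check against $\eta(1)$ just spell out what the paper's condensed proof leaves implicit.
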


\begin{IEEEproof}
Each arm sweeps once per frame, and its turnaround needs no approach move (Lemma~\ref{lem:sweep}). By \eqref{eq:timing}, each within-arm transition lasts $\max\{T_{\rm sw},T_{\rm rec}(\ell_{s,j})\}$, since motion overlaps the guard after the preceding payload. Each block also includes a final guard. Summing the $K$ payload durations and these intervals over all arms gives \eqref{eq:burst}.
\end{IEEEproof}

If every nonzero move lasts at least $T_{\rm sw}$, the denominator in \eqref{eq:burst} reduces to $KT_{\rm slot}+[K-\sum_s(N_s-1)]T_{\rm sw}+\sum_s M_s$. The individual move times then enter only through the total motion times $M_s$. For balanced loads with equal moves the interleaved schedule is at least as efficient (Theorem~\ref{thm:duty}).

\subsection{Rate Analysis Against Fixed Placement}
We compare PAP with fixed-placement two-user NOMA on one waveguide and TDMA on the same star, both of which avoid re-pinching. For users directly below their assigned arms, PAP serves every user at $R_0$ (A5) for a fraction $\eta(S)/K$ of each DL frame, an effective sum rate of $\eta(S)R_0$. Let $c\in(0,1]$ denote the ratio of a baseline's effective sum rate over DL time to $R_0$, including any electronic guards. The schemes use the same DL/UL time allocation, so PAP at least matches the baseline iff $\eta(S)\ge c$.

\begin{corollary}[Latency threshold]\label{cor:noma-cross}
Under the assumptions of Theorem~\ref{thm:duty}, let $q>1$. PAP at least matches a fixed-placement baseline of rate ratio $c$ iff $c\le T_{\rm slot}/(T_{\rm slot}+T_{\rm sw})$ and $T_{\rm rec}\le T_{\rm rec}^{*}(S)$, where
\begin{equation}\label{eq:trec-cross}
T_{\mathrm{rec}}^{*}(S)=\frac{S\bigl[qT_{\rm slot}/c-(T_{\rm slot}+T_{\rm sw})\bigr]}{q-1}-T_{\rm slot}.
\end{equation}
Strict outperformance holds iff both inequalities are strict. For $q=1$, no movement is needed and comparison depends only on the switching limit.
\end{corollary}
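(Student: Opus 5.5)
The argument rests on Theorem~\ref{thm:duty} and the criterion stated just before the corollary: since the schemes share the same DL/UL time allocation, PAP at least matches the baseline iff $\eta(S)\ge c$. With $q>1$ and balanced loads, \eqref{eq:eta} writes $\eta(S)=\min\{A,B(T_{\rm rec})\}$. Here $A=T_{\rm slot}/(T_{\rm slot}+T_{\rm sw})$ is the switching limit, and
\[
B(T_{\rm rec})=\frac{K T_{\rm slot}}{S(T_{\rm slot}+T_{\rm sw})+(q-1)(T_{\rm rec}+T_{\rm slot})}.
\]
A minimum is at least $c$ iff each term is, so $\eta(S)\ge c$ iff $c\le A$ and $B(T_{\rm rec})\ge c$. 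This gives the first condition of the corollary directly. It also shows that the first condition is necessary regardless of $T_{\rm rec}$: if $c>A$, no latency lets PAP match the baseline.

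Next I convert $B(T_{\rm rec})\ge c$ into a bound on $T_{\rm rec}$. Since $c>0$ and the denominator is positive, the inequality is equivalent to
\[
(q-1)(T_{\rm rec}+T_{\rm slot})\le \frac{K T_{\rm slot}}{c}-S(T_{\rm slot}+T_{\rm sw}).
\]
Substituting $K=Sq$ and dividing by $q-1>0$ (this is where $q>1$ is used) gives $T_{\rm rec}\le T_{\rm rec}^*(S)$ with $T_{\rm rec}^*(S)$ exactly as in \eqref{eq:trec-cross}. Every step is an equivalence, because the denominator of $B$ is strictly positive and $B$ is strictly decreasing in $T_{\rm rec}$ for $q>1$. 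The combined statement is therefore an iff. If $T_{\rm rec}^*(S)<0$, the condition simply cannot be met, which is consistent with the statement.

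For strict outperformance, $\eta(S)>c$ iff $\min\{A,B\}>c$ iff $A>c$ and $B>c$. The same chain of equivalences, run with strict inequalities, turns $B>c$ into $T_{\rm rec}<T_{\rm rec}^*(S)$. For $q=1$, Theorem~\ref{thm:duty} states that no move is needed and $\eta=A$, so the comparison reduces to $A\ge c$ and $T_{\rm rec}$ plays no role.

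The main obstacle is not algebraic. It is making sure the reduction to $\eta(S)\ge c$ is legitimate, and I would rely on the sentence preceding the corollary for this. PAP's effective DL sum rate is $\eta(S)R_0$, the baseline's is $cR_0$, and the DL/UL shares are identical, so the ratio comparison is exact. I would also check that using the exact $\min$ expression of Theorem~\ref{thm:duty}, rather than only an upper bound, is justified. It is, because that theorem states that the interleaved schedule achieves the bound, so $\eta(S)$ is the attained efficiency.
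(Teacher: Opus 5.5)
Your proposal is correct and takes the same approach as the paper, which derives the corollary in one line by requiring both terms of \eqref{eq:eta} to be at least $c$. Your inversion of the second term, using $K=Sq$ and dividing by $q-1>0$, reproduces \eqref{eq:trec-cross} exactly, and the strict and $q=1$ cases follow the same way.
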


The result follows by requiring both terms in \eqref{eq:eta} to be at least $c$. The tolerable latency increases by the same amount per added arm when $q$, $c$, $T_{\rm slot}$, and $T_{\rm sw}$ are fixed. The rate ratios of the two baselines are evaluated for $h=3$\,m, $D=10$\,m, and $\gamma_0=39.1$\,dB (28\,GHz, $P_d=20$\,dBm, $\sigma_n^2=-90$\,dBm), so that $R_0=12.98$\,bit/s/Hz.

\subsubsection{Fixed-Placement NOMA}
Consider two users directly below one arm at spacing $\Delta=5$\,m with symmetric rate targets, motivated by~\cite{Ref_ren2025nomaoroma}. The NOMA reference keeps the pinch fixed and uses superposition coding~\cite{Ref_ding2025pass} to serve both users continuously during DL time, neglecting guard overhead. Its sum rate is used as a reference for PAP with $K=Sq$ users. A scan over placements shows that placing the pinch over one user is nearly optimal for these parameters. This near user removes the other user's signal by perfect successive interference cancellation and sees $\gamma_0$, while the far user sees $\gamma_\Delta=P_d\eta_0/[\sigma_n^2(h^2+\Delta^2)]$. With a fraction $\alpha$ of the power for the near user, equal rates require $(1+\alpha\gamma_0)(1+\alpha\gamma_\Delta)=1+\gamma_\Delta$, whose positive root gives $c_{\rm N}=2\log_2(1+\alpha^\star\gamma_0)/\log_2(1+\gamma_0)$, with
\[
\alpha^\star=\frac{\sqrt{(\gamma_0+\gamma_\Delta)^2+4\gamma_0\gamma_\Delta^2}-(\gamma_0+\gamma_\Delta)}{2\gamma_0\gamma_\Delta}.
\]
Here $c_{\rm N}=0.9966$: NOMA at this placement achieves a sum rate $0.34\%$ below $R_0$, so PAP exceeds this reference rate only near the switching limit. Matching this reference rate requires $c_{\rm N}\le T_{\rm slot}/(T_{\rm slot}+T_{\rm sw})$, i.e., $T_{\rm sw}\lesssim3.4$\,\textmu s per 1\,ms slot.

\subsubsection{Fixed-Placement Star TDMA}
This baseline fixes the pinch of each arm at the arm center, the fixed placement maximizing the expected sum rate for users uniform on $[0,D]$, and incurs only $T_{\rm sw}$ per slot. Averaging \eqref{eq:rate} over a user uniform on the arm, at distance up to $m=D/2$ from the pinch, gives the sum-rate ratio before guard overhead
\begin{equation}\label{eq:rho-place}
\rho_{\mathrm{place}}=\frac{\ln\frac{m^2+a^2}{m^2+h^2}+\frac{2a}{m}\arctan\frac{m}{a}-\frac{2h}{m}\arctan\frac{m}{h}}{\ln(1+\gamma_0)},
\end{equation}
with $a^2=h^2+P_d\eta_0/\sigma_n^{2}$. The effective sum-rate ratio is $c=\rho_{\rm place}T_{\rm slot}/(T_{\rm slot}+T_{\rm sw})$. The user at the outer arm endpoint, at distance $m$ from the fixed pinch, has the payload-rate ratio $\epsilon_{\rm edge}=\log_2\!\big(1+P_d\eta_0/[\sigma_n^2(h^2+m^2)]\big)/R_0$. For this user's throughput, Corollary~\ref{cor:noma-cross} applies with $c=\epsilon_{\rm edge}T_{\rm slot}/(T_{\rm slot}+T_{\rm sw})$, since the common factor $R_0/K$ cancels. Here $\rho_{\rm place}=0.937$ and $\epsilon_{\rm edge}=0.852$ for every $S$, because every arm spans $D$. Once the latency is absorbed, the common guard factor cancels, so re-pinching gains $6.7\%$ in expected sum rate and $17\%$ for the outer-endpoint user.

Table~\ref{tab:cross} lists the resulting thresholds for $q=2$, equal move times, and $T_{\rm sw}/T_{\rm slot}=10^{-4}$. For users 5\,m apart, a move takes about 10\,s with $T_0=5$\,ms and $v=0.5$\,m/s. Exceeding the NOMA reference even with eight arms then requires service periods longer than about $1.4$\,s, as for quasi-static users.
\begin{table}[!t]
\centering
\caption{Latency threshold $T^{*}_{\rm rec}(S)/T_{\rm slot}$ from \eqref{eq:trec-cross} up to which PAP at least matches a fixed-placement reference of rate ratio $c$}
\label{tab:cross}
\setlength{\tabcolsep}{4.5pt}
\begin{tabular}{lccccc}
\hline
Reference & $c$ & $S{=}1$ & $S{=}2$ & $S{=}4$ & $S{=}8$\\
\hline
NOMA, sum rate & 0.9966 & 0.0067 & 1.01 & 3.03 & 7.05\\
Star TDMA, sum rate & 0.9371 & 0.134 & 1.27 & 3.54 & 8.07\\
Star TDMA, outer endpoint & 0.8522 & 0.347 & 1.69 & 4.39 & 9.77\\
\hline
\end{tabular}
\end{table}
For balanced loads and equal move times, use the interleaved schedule with $S\ge S^\star$ to absorb the latency (Theorem~\ref{thm:duty}). Its gain over fixed-placement TDMA is largest for the outer-endpoint user.

\section{Numerical Results}\label{sec:cross}

A simulator first validates the closed-form laws and then relaxes the assumption that users lie directly below the arms. It executes the timing recursion \eqref{eq:timing} in continuous time (A4). The parameters are those of the rate analysis, with $T_{\rm slot}=10$~ms, $T_{\rm sw}=1$~\textmu s, and $K_s=4$ equispaced users per arm. To sweep the latency we fix $T_0=5$~ms and $v=0.5$~m/s and vary the positive user spacing $\ell$ up to 24.75~cm, so that $T_{\rm rec}(\ell)$ in \eqref{eq:Trec} ranges from $0.5001$ to $50$ slot durations. The zero-distance case, $T_{\rm rec}(0)=0$, is checked separately.
\begin{figure}[!t]
\centering
\includegraphics[width=0.73\columnwidth]{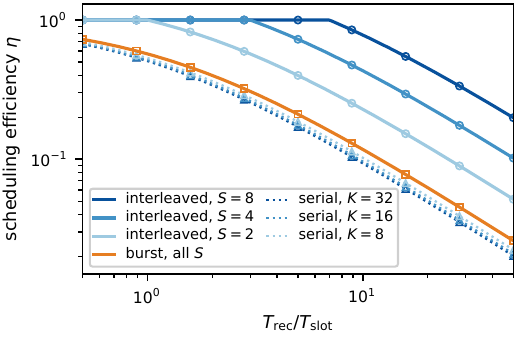}
\caption{Scheduling efficiency versus $T_{\rm rec}/T_{\rm slot}$, at $K_s{=}4$ users per arm and $T_{\rm slot}{=}10$~ms. Serial curves use the same total $K=4S$. Lines: from \eqref{eq:eta}, \eqref{eq:burst}, and $\eta(1)$; markers: simulation.}
\label{fig:eff}
\end{figure}
\begin{figure}[!t]
\centering
\includegraphics[width=0.73\columnwidth]{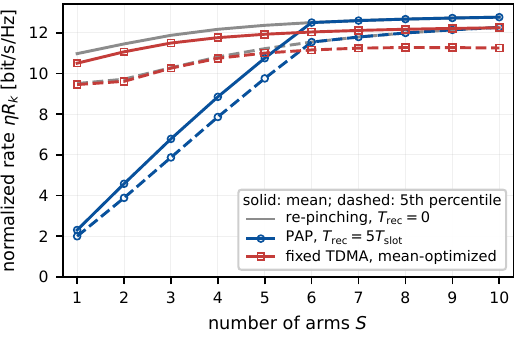}
\caption{Mean (solid) and 5th-percentile (dashed) normalized rate $\eta R_k$ from $10^5$ uniform disk samples. PAP: conditional on $q=4$ and $T_{\rm rec}=5T_{\rm slot}$ in \eqref{eq:eta}; re-pinching with $T_{\rm rec}=0$: zero-latency reference; fixed-placement TDMA: mean-rate-maximizing pinch position.}
\label{fig:cov}
\end{figure}

In Fig.~\ref{fig:eff}, interleaved and burst markers agree with \eqref{eq:eta} and \eqref{eq:burst}, respectively, to within $10^{-12}$. Interleaving reaches the switching limit at $S\ge S^\star$. With equal moves and fixed $K_s$, burst does not depend on $S$ and stays below interleaving over the plotted range. For the rate ratios, a Monte Carlo average over 2000 drops of 24 uniform users per arm reproduces \eqref{eq:rho-place} to within $2\times10^{-4}$, and the worst user of each drop averages a ratio of 0.859, approaching $\epsilon_{\rm edge}=0.852$. Fig.~\ref{fig:cov} evaluates the mean and 5th percentile of $\eta R_k$, i.e., $K$ times per-user throughput within a DL frame. Spatial averaging uses $10^5$ uniform disk samples, and the efficiency of PAP follows from \eqref{eq:eta} with $q=4$ and $T_{\rm rec}=5T_{\rm slot}$. Fixed-placement TDMA maximizes the sample mean rate over a 0.05~m grid. PAP absorbs the latency at $S^\star=6$. Its 5th-percentile gain over fixed placement increases from 3.5\% at six arms to 9\% at ten, while the mean gain stays near 4\%. More arms narrow the sectors and shrink the lateral offsets that both schemes share, whereas only PAP removes the along-arm distance.

A separate test uses 1000 independent drops of $K=24$ users uniform in the disk, with the same $T_0$ and $v$, distance-based move times, and empty arms skipped. With eight arms, the scheduling efficiency of the interleaved schedule over a DL--UL cycle averages 0.0116 at $T_{\rm slot}=10$~ms and 0.923 at $T_{\rm slot}=2$~s; the 95\% confidence intervals across drops are $[0.0114,0.0117]$ and $[0.920,0.926]$, respectively. These timelines include unequal arm loads and move times. Thus, meter-scale motion needs service periods of seconds; the 50~ms moves assumed in Fig.~\ref{fig:cov} apply to closely spaced users.

\section{Conclusion}\label{sec:concl}

S-PASS combines a star of waveguide arms on one RF chain with PAP, which pre-positions the antennas of idle arms while one arm transmits. Under balanced loads and equal move times, the scheduling efficiency increases with the number of arms up to the switching limit; six arms suffice for a 50~ms re-pinch and 10~ms slots. With $q=2$ and $T_{\rm sw}/T_{\rm slot}=10^{-4}$, PAP at least matches the two-user NOMA reference for latencies up to 0.0067 slot durations with one arm and 7.05 with eight, about one slot duration more per added arm. Once the latency is absorbed and users lie directly below the arms, the gains over fixed-placement TDMA are 6.7\% in expected sum rate and 17\% for the outer-endpoint user. With users spread over the disk and a move time of five slot durations, the 5th-percentile gain reaches 9\% at ten arms. Random user drops with distance-based move times show that meter-scale motion needs service periods of seconds. Future work includes measured actuator parameters and the terahertz regime.
\bibliographystyle{IEEEtran}
\bibliography{main}
\end{document}